\newtheorem{theorem}{Theorem}
\newtheorem{lemma}[theorem]{Lemma}
\newtheorem{definition}{Definition}
\newtheorem{corollary}{Corollary}
\date{}
\title{On Strong Centerpoints}
\author{Pradeesha Ashok \thanks{Department of Computer Science and Automation,
				Indian Institute of Science, Bangalore, India.
				Email :\texttt{pradeesha@csa.iisc.ernet.in}}
\and Sathish Govindarajan\thanks{Department of Computer Science and Automation,
				Indian Institute of Science, Bangalore, India.
				Email :\texttt{gsat@csa.iisc.ernet.in}}}
\begin{document}
%
%
%
\maketitle
\begin{abstract}
Let $P$ be a set of $n$ points in $\mathbb{R}^d$ and $\mathcal{F}$ be a family of geometric objects. We call a point $x \in P$ a strong centerpoint of $P$ w.r.t $\mathcal{F}$ if $x$ is contained in all $F \in \mathcal{F}$ that contains more than $cn$ points from $P$, where $c$ is a fixed constant. A strong centerpoint does not exist even when $\mathcal{F}$ is the family of halfspaces in the plane. We prove the existence of strong centerpoints with exact constants for convex polytopes defined by a fixed set of orientations. We also prove the existence of strong centerpoints for abstract set systems with bounded intersection.

 \end{abstract}


 \section{Introduction}

Let $P$ be a set of $n$ points in $\mathbb{R}^d$. A point $x \in \mathbb{R}^d$ is said to be a \emph{centerpoint} of $P$ if any halfspace that contains $x$ contains at least $n\over{d+1}$ points of $P$. Equivalently, $x$ is a centerpoint if and only if $x$ is contained in every convex object that contains more than $\frac{d}{d+1} n$ points of $P$. It has been proved that a centerpoint exists for any pointset $P$ and the constant $\frac{d}{d+1}$ is tight~\cite{Rad46}.

The notion of centerpoint has found many applications in statistics, combinatorial geometry, geometric algorithms, etc\cite{BF84, MTT93, MTT97, Yao83}. Linear time algorithms to compute approximate centerpoint is given in \cite{CEM93, Mat91, Meg85, MS10,Ver97}. Jadhav and Mukhopadhyay\cite{JM94} gave a linear time algorithm to compute a centerpoint in the plane. Chan\cite{Cha04} gave a randomized algorithm that compute the centerpoint in $\mathbb{R}^d$ in $O(n^{d-1})$ time.

The centerpoint question i.e., finding a constant $\epsilon, 0 \leq \epsilon \leq 1$, such that there exists a centerpoint for any pointset that is contained in all objects of a certain type that contains more than $\epsilon$ fraction of the points, has been asked for special classes of convex objects. Aronov et al.\cite{AAH09} proved tight bounds for centerpoint for the family of halfplanes, axis-parallel rectangles and disks in $\mathbb{R}^2$. Another well-studied generalization of centerpoint is to allow more than one point. This is related to an area called $\epsilon$-nets. 

$N \subset P$ is said to be a (strong) $\epsilon$-net of $P$ w.r.t a family of geometric objects $\mathcal{R}$ if $N \cap R \ne \emptyset$ for all $R \in \mathcal{R}$ that contains more than $\epsilon n$ points from $P$. $N$ is called a weak $\epsilon$-net if $N$ is not restricted to be a subset of $P$ but is allowed to be any subset of $\mathbb{R}^d$. Haussler and Welzl\cite{HW87} showed that small-sized $\epsilon$-nets exist for range spaces of bounded VC-dimension. Small $\epsilon$-net question investigates the bounds on $\epsilon$ when the size of $\epsilon$-net is fixed as a small constant\cite{AAH09, BZ06,Dul06, MR09}. Note that a centerpoint is a weak $\epsilon$-net of size one, w.r.t convex objects. 


In general, a centerpoint need not be a point of $P$ and can be any point in $\mathbb{R}^d$. In this paper, we study the question of enforcing the centerpoint to be a point of $P$. We call such a centerpoint a strong centerpoint. 

We now define strong centerpoints in an abstract setting.
\begin{definition}
 Let $P$ be a set of $n$ elements and $\mathcal{S}$ be a family of subsets of $P$. Then $p \in P$ is called the \emph{strong centerpoint} of $P$ w.r.t $\mathcal{S}$ if $p \in S $ for all $S \in \mathcal{S}$ such that $\vert S  \vert > c n$, where $0 <c < 1$ is a fixed constant. 
\end{definition}

It is easy to see that a strong centerpoint does not exist even when $P$ is a set of $n$ points in $\mathbb{R}^d$ and $\mathcal{S}$ is defined by halfspaces. Let $P$ be a set of $n$ points in convex position. For any point $p \in P$, there exists a halfspace that contains all the points in $P \setminus \{p\}$. Therefore, a strong centerpoint does not exist for halfspaces, and therefore, for disks and convex objects. Ashok et al.\cite{AGK10} proved the existence of strong centerpoints for axis-parallel rectangles in $\mathbb{R}^2$. To the best of our knowledge, no other results on strong centerpoints are known.

In this paper, we study the strong centerpoint question and prove tight bounds for some classes of geometric and abstract objects. 

\subsection{Our results}
Let $P$ be a set of $n$ points in $\mathbb{R}^d$.
\begin{enumerate}
 \item We prove a strong centerpoint exists for a special class of convex polytopes viz. convex polytopes defined by a set of fixed orientations. Let $\mathcal{F}$ represent the set of convex polytopes defined by a set of $k$ fixed orientations. Then there exists a strong centerpoint $p \in P$ such that $p$ is contained in all $F \in \mathcal{F}$ that contains more than $(1-\frac{1}{k})n$ points from $P$. Moreover, this bound is tight. Our proof is constructive and can be converted into a linear time algorithm to compute such a strong centerpoint. Our argument is a generalization of a construction given in Lemma 2 of \cite{AGK10}.

%
 \item We prove the existence of a strong centerpoint for set systems with ``bounded intersection''. Let $(P,\mathcal{S}_k)$ be a set system where $P$ is a set of $n$ elements and $\mathcal{S}_k$ is a collection of subsets of $P$ with the property that the intersection of any $k$ subsets in $\mathcal{S}_k$ is either equal to the intersection of strictly fewer sets among them or contains at most one element of $P$. We prove that a strong centerpoint $p \in P$ exists such that $p$ is contained in all $S \in \mathcal{S}_k$ such that $\vert S \cap P \vert > (1-\frac{1}{k})n$.
\end{enumerate}
Section \ref{def} gives some definitions and preliminary results that will be used in subsequent sections. Section \ref{convex} proves the existence of strong centerpoints and gives tight bounds for the family of convex polytopes defined by a set of fixed orientations. 
In section \ref{abstract_new}, we prove the existence of strong centerpoint for set systems with bounded intersection.
 \section{Definitions and Preliminary Results}\label{def}
 In this section, we give some definitions and preliminary results that will be used in subsequent sections.

\begin{definition}
 The \emph{orientation} of a halfspace is the direction of the outward normal to that halfspace.
\end{definition}
Note that if two halfspaces $H_1$ and $H_2$ are of the same orientation then one of them is contained in the other. 
\begin{definition}
Let $C$ be a convex polytope in $\mathbb{R}^d$. Let $H_1,\cdots,H_k$ be the halfspaces defined by faces of $C$ such that $C=\bigcap \limits_{i=1}^kH_i$. We call $H_1,\cdots,H_k$ the defining halfspaces of $C$.
\end{definition}

For a general convex polytope $C$, the defining halfspaces of $C$ can be of any orientation. We consider a class of convex polytopes where the orientation of the defining halfspaces belong to a fixed set.

\noindent Let $\mathcal{O}$ be a set of orientations.
\begin{definition}
 A family of convex polytopes $\mathcal{C}$ is said to be \emph{defined by $\mathcal{O}$} if for any $C \in \mathcal{C}$, the orientations of all the defining halfspaces of $C$ belong to $\mathcal{O}$.
\end{definition} 

Many common classes of geometric objects fall into this category. For example, axis-parallel boxes are defined by a set of $2d$ fixed orientations viz. direction of positive and negative axes in all the $d$ dimensions. Some other geometric objects that fall into this category are
\begin{figure}
 \begin{center}
%
%
\includegraphics[scale=0.5]{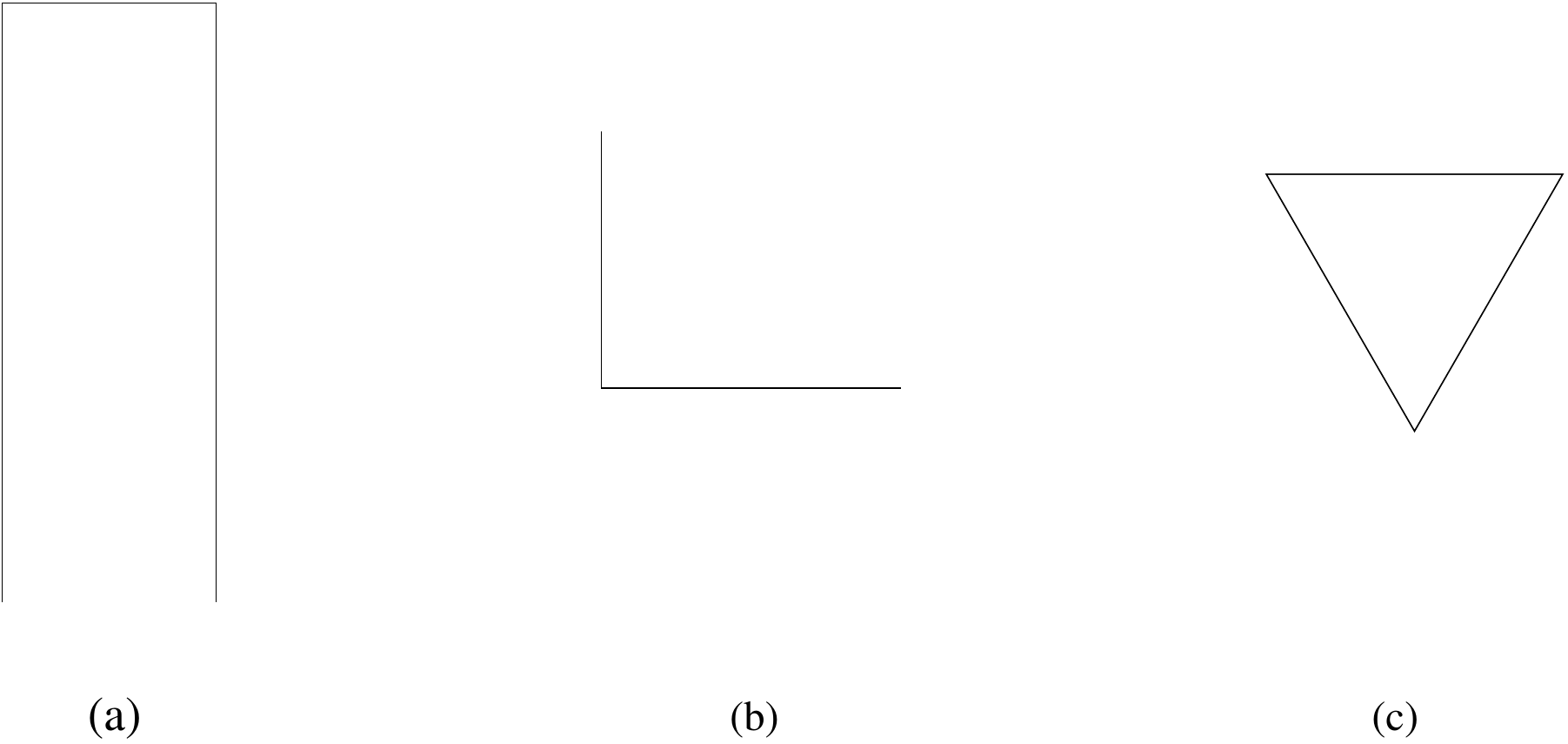} 
 \caption{Examples of Convex Polytopes defined by a fixed set of Orientations: (a) Skylines in $\mathbb{R}^2$  (b) Orthants of fixed orientations in $\mathbb{R}^2$  (c) Downward facing equilateral triangles}
  \label{convex_eg}
 \end{center}

\end{figure}

\begin{itemize}
 \item Skylines in $\mathbb{R}^d$: Skylines are axis-parallel boxes where one fixed axis, say the negative axis in the $d^{th}$ dimension, is unbounded. These are convex polytopes defined by $2d-1$ fixed orientations viz., direction of positive axes in all the $d$ dimensions and direction of negative axes in all $d$ dimensions except the last one. Figure \ref{convex_eg}(a) shows skylines in $\mathbb{R}^2$. 
 \item Orthants of fixed orientation in $\mathbb{R}^d$: Orthants in $\mathbb{R}^d$ can be considered as intersection of $d$ mutually orthogonal halfspaces. For orthants of fixed orientation, the orientations of the defining halfspaces are fixed, say as the direction of positive axes in all the $d$ dimensions. Figure \ref{convex_eg}(b) shows orthants in $\mathbb{R}^2$.
\item Downward facing equilateral triangles: These are equilateral triangles with one side parallel to the X-axis and the corner opposite to this side lying below it\cite{Pan11}. Therefore, this family of triangles are defined by a set of three fixed orientations viz., directions at $90 ^\circ, 210^\circ$ and $330^\circ$ with positive $X$ axis(See figure \ref{convex_eg}(c)). 
 \item Homothets of a $k$-faced convex polytope where $k$ is a fixed constant.
\end{itemize}


\begin{definition}
 A set system $(P,\mathcal{S}_k)$ is said to be a \emph{set system of bounded intersection} if it satisfies the following property: For any $k$ sets in $\mathcal{S}_k$, their intersection is either equal to the intersection of strictly fewer sets among them or contains atmost one element of $P$. 
\end{definition}
Many geometric set systems are set systems with bounded intersection. For example, geometric set systems defined by hyperplanes in $\mathbb{R}^d$ have the property that the intersection of any $d$ sets is either equal to the intersection of strictly fewer sets among them or contains atmost one point. Similarly,  geometric set systems defined by straight lines have the property that any two sets intersect in atmost one point and set systems defined by circles have the property that any three sets intersect in atmost one point. 
\begin{lemma}\label{lines}
 Let $P$ be a set of $n$ elements and $\mathcal{S}_2$ be a collection of subsets of $P$ with the property that the intersection of any two sets in $\mathcal{S}_2$ contains atmost one element of $P$. Then there exists $p\in P$ such that $p$ is contained in all $S \in \mathcal{S}_2$ that contains more than $\frac{n}{2}$ elements from $P$. 
\end{lemma}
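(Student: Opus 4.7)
\medskip

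\noindent\textbf{Proof plan.}
Let $\mathcal{T} = \{S \in \mathcal{S}_2 : |S| > n/2\}$ denote the collection of ``large'' sets; these are precisely the sets that must contain the alleged strong centerpoint. If $|\mathcal{T}| \leq 1$, the claim is trivial (any element of $P$ works if $\mathcal{T}=\emptyset$, and any element of the unique large set works if $|\mathcal{T}|=1$). So the interesting case is $|\mathcal{T}| \geq 2$, and my first step is to understand how any two members of $\mathcal{T}$ interact. For any distinct $S, S' \in \mathcal{T}$, the pigeonhole inequality $|S|+|S'| > n$ forces $S \cap S' \neq \emptyset$, and the bounded-intersection hypothesis forces $|S \cap S'| \leq 1$. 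Hence any two large sets meet in exactly one element of $P$.

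The natural candidate for the centerpoint is this intersection point. Fix any two large sets $S_1, S_2 \in \mathcal{T}$ and let $\{p\} = S_1 \cap S_2$; I will argue that $p$ lies in every $S_3 \in \mathcal{T}$. Suppose for contradiction that $p \notin S_3$. Then by the previous paragraph the singleton intersections $S_1 \cap S_3 = \{q_1\}$ and $S_2 \cap S_3 = \{q_2\}$ satisfy $q_1 \neq p$ and $q_2 \neq p$, which in turn forces $q_1 \notin S_2$ and $q_2 \notin S_1$ (otherwise $q_i \in S_1 \cap S_2 = \{p\}$). In particular $S_1 \cap S_2 \cap S_3 = \emptyset$. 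Applying inclusion-exclusion and using $|S_i| > n/2$ and $|S_i \cap S_j| = 1$:
\[
n \;\geq\; |S_1 \cup S_2 \cup S_3| \;=\; \sum_i |S_i| - \sum_{i<j}|S_i \cap S_j| + |S_1 \cap S_2 \cap S_3| \;>\; \tfrac{3n}{2} - 3,
\]
which yields $n < 6$, a contradiction for all sufficiently large $n$. So $p \in S_3$, establishing that $p$ is the desired strong centerpoint.

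The main conceptual step is recognizing that the $> n/2$ threshold together with the ``pairwise intersection $\leq 1$'' constraint together force a triple of large sets to occupy $\approx 3n/2$ distinct elements, which must fit inside the ground set of size $n$; this is exactly what inclusion-exclusion rules out. The only mild subtlety is the very small-$n$ regime where the inclusion-exclusion bound is not binding, but these cases can be dispatched by direct inspection (for $n \in \{1,2\}$ every large set equals $P$, so any element works). I expect no other obstacles; the proof will also serve as a warm-up for the general $k$-wise bounded intersection case handled later in Section~\ref{abstract_new}.
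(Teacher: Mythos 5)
Your route is genuinely different from the paper's. The paper fixes one large set $S$, takes $p$ to be an \emph{arbitrary} element of $S$, and argues that any other set meeting $S$ in at most one element has size at most $n/2$ --- in effect claiming that at most one large set exists. You instead take $p$ to be the unique point of $S_1\cap S_2$ for two large sets and rule out a third large set avoiding $p$ by inclusion--exclusion. Your choice of $p$ is the more robust one: for odd $n$ two distinct sets of size $(n+1)/2>n/2$ can coexist while meeting in exactly one point (e.g. $\{1,\dots,5\}$ and $\{5,\dots,9\}$ for $n=9$), so the intersection point is then the \emph{only} admissible centerpoint and an arbitrary $p\in S$ fails; your argument handles this configuration correctly, whereas the paper's, read literally, only covers the case where a single large set exists.

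The one real gap is the deferred small-$n$ case. Your inclusion--exclusion bound only bites for $n\ge 6$, you explicitly inspect only $n\in\{1,2\}$, and $n\in\{3,4,5\}$ is left to an unexamined ``direct inspection.'' For $n=4,5$ this is fixable: using the integer bound $|S_i|\ge\lfloor n/2\rfloor+1$, the same inclusion--exclusion gives $n\ge 3\lfloor n/2\rfloor$, a contradiction for every $n\ge 4$ (alternatively, for $n=4$ two distinct large sets would already share two elements, violating the hypothesis). But $n=3$ cannot be dispatched at all: take $P=\{1,2,3\}$ and $\mathcal{S}_2=\{\{1,2\},\{2,3\},\{1,3\}\}$; every pairwise intersection has exactly one element, every set has $2>n/2$ elements, and no element lies in all three. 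So the lemma as stated is false for $n=3$, and your proof (like the paper's) silently assumes this case away. You should either exclude $n=3$ or record the counterexample; for all $n\ne 3$ your argument, run with the integer-valued inclusion--exclusion, is complete and correct.
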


\begin{proof}
If none of the sets in $\mathcal{S}_2$ contain more than $\frac{n}{2}$ elements from $P$ then there is nothing to prove. Therefore, assume such a set $S$ exists.
Let $p$ be any element in $ S$. We claim that any set that contains more than $\frac{n}{2}$ elements from $P$ contains $p$. Let $S_1$ be any nonempty set in $\mathcal{S}_2$. If $S_1=S$ then $S_1$ contains $p$. Otherwise $\vert S \cap S_1 \vert \leq 1$ and $\vert S_1 \vert \leq \frac{n}{2}$.
\end{proof}

\section{Convex Polytopes defined by a fixed set of Orientations}\label{convex}
Let $P$ be a set of $n$ points in $\mathbb{R}^d$. Let $\mathcal{C}$ be a family of convex polytopes in $\mathbb{R}^d$ defined by a fixed set of orientations. We show the existence of strong centerpoints for $P$ w.r.t $\mathcal{C}$ and prove tight bounds. 
\begin{theorem} \label{convexproof}
 Let $\mathcal{C}$ be a family of convex polytopes in $\mathbb{R}^d$ defined by $\mathcal{O}$ and $\vert \mathcal{O}\vert =k$.  Then there exists a strong centerpoint $p \in P$ with respect to $\mathcal{C}$ such that $p$ is contained in all $C \in \mathcal{C}$ that contains more than $(1-\frac{1}{k})n$ points from $P$. Moreover this bound is tight.
\end{theorem}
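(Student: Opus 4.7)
The plan is a ranking-and-pigeonhole argument in the same spirit as Lemma~2 of \cite{AGK10}. Since halfspaces with the same orientation are nested, every $C\in\mathcal{C}$ may be written as $C=\bigcap_{i=1}^{k} H_i$ where $H_i$ is a halfspace with outward normal $o_i\in\mathcal{O}$ (taking $H_i=\mathbb{R}^d$ if the orientation $o_i$ is absent from the defining list). If $|C\cap P|>(1-\tfrac{1}{k})n$, then $|P\setminus C|<n/k$, and since $P\setminus C=\bigcup_i(\overline{H_i}\cap P)$, each complementary halfspace $\overline{H_i}$ meets $P$ in fewer than $n/k$ points.

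For each orientation $o_i$ I would fix a consistent tie-breaking rule and define the ``co-rank'' $f_i(p)=|\{q\in P:o_i\cdot q>o_i\cdot p\}|$. Whenever $p\in\overline{H_i}$, the point $p$ together with every point strictly more extreme than $p$ along $o_i$ also lies in $\overline{H_i}$, giving $|\overline{H_i}\cap P|\ge f_i(p)+1$. Combining this with the bound of the previous paragraph, a bad $C$ that omits $p$ forces $f_i(p)\le\lceil n/k\rceil-2$ for some $i$. Let $B_i=\{p\in P:f_i(p)\le \lceil n/k\rceil-2\}$; since $f_i$ is a bijection onto $\{0,1,\dots,n-1\}$, one has $|B_i|\le \lceil n/k\rceil-1$, and hence $|\bigcup_i B_i|\le k(\lceil n/k\rceil-1)<n$, regardless of whether $k$ divides $n$. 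Any $p\in P\setminus\bigcup_i B_i$ is therefore a strong centerpoint, and locating one reduces to a linear-time selection of the $(\lceil n/k\rceil-1)$-th most extreme point in each of the $k$ fixed directions.

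For the tightness half, I would take $n$ a multiple of $k$ and partition $P$ into $k$ equal-sized clusters $G_1,\dots,G_k$ with the centroid of $G_j$ placed at $R\,o_j$ for large $R$; assuming the orientations are distinct unit vectors, $o_j\cdot o_\ell<1$ for $\ell\ne j$, so every point of $G_j$ is strictly more extreme in direction $o_j$ than every point of $P\setminus G_j$. Given any $p\in G_j$, the single halfspace with outward normal $o_j$ whose boundary separates $G_j$ from $P\setminus G_j$ is itself a member of $\mathcal{C}$, misses $p$, and contains exactly $(1-\tfrac{1}{k})n$ points of $P$. Hence no constant strictly smaller than $1-\tfrac{1}{k}$ can suffice.

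The only conceptually delicate ingredient is the opening reduction of $C$ to an intersection of at most $k$ halfspaces together with the observation that a bad $C$ missing $p$ must do so through a single $\overline{H_i}$ of small weight; once this is in hand, the counting argument and the tightness construction are both direct, and the rounding in $\lceil n/k\rceil$ causes no real difficulty because the union bound remains strict ($<n$) in both the divisible and nondivisible cases.
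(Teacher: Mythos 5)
Your proposal is correct and follows essentially the same route as the paper: your set $P\setminus\bigcup_i B_i$ of points with large co-rank in every direction is exactly the paper's region $E=\bigcap_i H_i$ (the intersection of $k$ halfspaces each excluding the roughly $n/k-1$ most extreme points in its orientation), the union bound showing it is nonempty is the same, and the tightness construction with $k$ extremal clusters matches the paper's. Your version is slightly more careful about the case $k\nmid n$ and about ties, but the underlying argument is identical.
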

\begin{proof}

For each orientation $\hat{i}$ in $\mathcal{O}$, let halfspace $H_i$ have orientation $\hat{i}$ and  $\vert H_i \cap P \vert= n-\frac{n}{k}+1$. Let $E$ represent the region $\bigcap \limits_{i=1}^k H_i$. We claim that $\vert E \cap P \vert \ne 0$.

Let $\overline{H_i}$ represent $\mathbb{R}^d\setminus H_i$ for all $i$, $1\leq i \leq k$. Now $E=\mathbb{R}^d\setminus\{\bigcup \limits_{i=1}^k \overline{H_i}\}$
\begin{eqnarray*}
\vert E \cap P \vert &=&n-\vert (\bigcup\limits_{i=1}^k \overline{H_i}) \cap P \vert \\
&\geq& n-k(\frac{n}{k}-1)\\
&\geq& k
\end{eqnarray*}
Therefore, region $E$ contains at least $k$ points from $P$. 

Let $p \in P$ be any point in $E$. We claim that $p$ is a strong centerpoint for $P$ w.r.t $\mathcal{C}$ i.e., $p$ is contained in all $C \in \mathcal{C}$ that contains more than $n-\frac{n}{k}+1$ points from $P$.

 Let $C \in \mathcal{C}$ and $\vert C \cap P \vert > (1-\frac{1}{k})n$. Let $H^\prime_1,H^\prime_2,\cdots,H^\prime_k$ be the defining halfspaces of $C$. W.l.o.g assume that $H_i$ and $H^\prime_i$ have the same orientation for all $i$, $1\leq i \leq k$. Now, either $H_i^\prime \subset H_i$ or $H_i \subseteq H_i^\prime$. Suppose $H_i^\prime \subset H_i$ for some $i$, $1\leq i \leq k$. This implies that $C \cap \overline{H_i}=\emptyset$. Since $\overline{H_i}$ contains $\frac{n}{k}$ points of $P$ this implies that $\vert C \cap P \vert \leq (1-\frac{1}{k})n$, a contradiction. Therefore, $H_i \subseteq H_i^\prime$ for all $i, 1 \leq i \leq k$. Therefore, 
 

 \begin{center}
  $ E=\bigcap\limits_{i=1}^k H_i \subseteq \bigcap\limits_{i=1}^k H^\prime_i=C$
 \end{center}

\noindent Since $C$ contains region $E$, $C$ contains $p$.
%

To prove the lower bound, let $P$ be arranged as $k$ subsets $P_1,P_2,\cdots,P_k$ of equal size. Each $P_i$ is placed at unit distance from the origin along the orientation $\hat{i}\in \mathcal{O}$. 

Therefore, for all $i$, $1\leq i \leq k$, there exists halfspaces $H_i$ and $H^\prime_i$ of orientation $\hat{i}$ such that $H_i\cap P =P \setminus P_i$   and $H_i^\prime \cap P=P$. For any point $p\in P_i$, $C =\left( \bigcap\limits_{j=1}^k H^\prime_j \right) \cap H_i$ contains $(1-\frac{1}{k})n$ points from $P$ but does not contain $p$.
\end{proof}
\begin{corollary}
 Let $P$ be a set of $n$ points in $\mathbb{R}^2$ and $\mathcal{T}$ represent the family of downward facing equilateral triangles. A strong centerpoint $p\in P$ exists w.r.t $\mathcal{T}$  such that $p$ is contained in all $T \in \mathcal{T}$ that contain more than $\frac{2n}{3}$ points from $P$.
\end{corollary}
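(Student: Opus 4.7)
The corollary is an immediate specialization of Theorem \ref{convexproof}. The plan is simply to verify that the family $\mathcal{T}$ of downward facing equilateral triangles fits into the framework of convex polytopes defined by a fixed set of orientations, and then plug in the appropriate value of $k$.

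First, I would recall from the introductory discussion (and Figure \ref{convex_eg}(c)) that every downward facing equilateral triangle has its three sides in three fixed orientations: the outward normals point at $90^\circ$, $210^\circ$, and $330^\circ$ relative to the positive $x$-axis. In other words, if $\mathcal{O}$ denotes this set of three orientations, then every $T \in \mathcal{T}$ is the intersection of three halfspaces whose orientations all lie in $\mathcal{O}$. Hence $\mathcal{T}$ is a family of convex polytopes defined by $\mathcal{O}$ with $|\mathcal{O}| = k = 3$.

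Next, I would invoke Theorem \ref{convexproof} with this family and $k=3$. The theorem asserts the existence of a point $p \in P$ contained in every $C \in \mathcal{T}$ with $|C \cap P| > (1 - \tfrac{1}{k})n = \tfrac{2n}{3}$, which is exactly the claim of the corollary. No additional argument is required, and there is no real obstacle beyond observing that the orientation set has size three.
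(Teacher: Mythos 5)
Your proposal is correct and matches the paper's own proof, which likewise just observes that $\mathcal{T}$ is defined by the three fixed orientations at $90^\circ$, $210^\circ$, and $330^\circ$ and applies Theorem \ref{convexproof} with $k=3$. No further comment is needed.
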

\begin{proof}
 The result follows from Theorem \ref{convexproof} and the fact that $\mathcal{T}$ is defined by a set of three fixed orientations.
\end{proof}
\begin{corollary}
 Let $P$ be a set of $n$ points in $\mathbb{R}^d$ and $\mathcal{K}$ represent the family of skylines. A strong centerpoint $p\in P$ exists w.r.t $\mathcal{K}$  such that $p$ is contained in all $K \in \mathcal{K}$ that contain more than $(1-\frac{1}{2d-1})n$ points from $P$.
\end{corollary}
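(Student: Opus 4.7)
The plan is to obtain this as an immediate application of Theorem~\ref{convexproof}, so the only real work is to identify the appropriate set of orientations and verify its cardinality. First, I would recall the description of skylines already given in the paper: a skyline in $\mathbb{R}^d$ is an axis-parallel box in which one distinguished axis direction (say the negative $d^{\mathrm{th}}$ axis) is unbounded. Consequently, every skyline is the intersection of halfspaces whose outward normals lie in a fixed set $\mathcal{O}$ consisting of the positive axis directions in all $d$ coordinates together with the negative axis directions in the first $d-1$ coordinates, for a total of $|\mathcal{O}| = 2d-1$ orientations.

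Next, I would observe that the family $\mathcal{K}$ of skylines fits the definition of a family of convex polytopes defined by $\mathcal{O}$: each $K \in \mathcal{K}$ is the intersection of halfspaces whose orientations all come from this fixed $\mathcal{O}$. Applying Theorem~\ref{convexproof} with $k = 2d-1$ then yields a point $p \in P$ that lies in every $K \in \mathcal{K}$ containing more than $\bigl(1 - \tfrac{1}{2d-1}\bigr)n$ points of $P$, which is exactly the statement of the corollary.

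There is essentially no obstacle here beyond matching notation; the only point one might worry about is that some skylines are intersections of strictly fewer than $2d-1$ halfspaces (e.g., when some bounding faces are absent). However, Theorem~\ref{convexproof} only requires that the orientations of the \emph{defining} halfspaces all lie in $\mathcal{O}$, not that all $2d-1$ orientations actually appear, so this causes no issue. Hence the corollary follows directly.
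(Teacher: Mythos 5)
Your proposal is correct and matches the paper's proof, which likewise cites Theorem~\ref{convexproof} together with the observation that skylines are defined by a fixed set of $2d-1$ orientations. Your explicit enumeration of the orientations and the remark about degenerate skylines are just slightly more detailed versions of the same argument.
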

\begin{proof}
The result follows from Theorem \ref{convexproof} and the fact that $\mathcal{K}$ is defined by a set of $2d-1$ fixed orientations.
\end{proof}

\begin{corollary}
 Let $P$ be a set of $n$ points in $\mathbb{R}^d$ and $\mathcal{T}$ represent the family of orthants of fixed orientation. A strong centerpoint $p\in P$ exists w.r.t $\mathcal{T}$  such that $p$ is contained in all $T \in \mathcal{T}$ that contain more than $(1-\frac{1}{d})n$ points from $P$.
\end{corollary}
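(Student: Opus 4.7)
The plan is to derive the corollary as an immediate specialization of Theorem \ref{convexproof}, so the only work is to verify that the family $\mathcal{T}$ of orthants of fixed orientation fits the hypothesis of being defined by a fixed set of $k$ orientations with $k=d$.

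First I would describe the orientation set $\mathcal{O}$ explicitly. By definition, an orthant of fixed orientation in $\mathbb{R}^d$ is the intersection of $d$ mutually orthogonal halfspaces whose outward normals lie along a prescribed set of $d$ pairwise orthogonal directions that is the same for every element of $\mathcal{T}$ (for instance, the directions of the $d$ negative coordinate axes). Letting $\mathcal{O}$ be this fixed set of $d$ directions gives $|\mathcal{O}|=d$, and the defining halfspaces of every $T \in \mathcal{T}$ have orientations belonging to $\mathcal{O}$. Hence $\mathcal{T}$ is defined by $\mathcal{O}$ in the sense of the definition preceding Theorem \ref{convexproof}.

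Next I would invoke Theorem \ref{convexproof} with $k=d$. It yields a point $p \in P$ contained in every $T \in \mathcal{T}$ with $|T \cap P| > \bigl(1-\tfrac{1}{d}\bigr)n$, which is exactly the conclusion claimed.

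There is no substantive obstacle: the corollary is a direct book-keeping consequence of the main theorem. The only mild subtlety is noting that every orthant of fixed orientation genuinely uses all $d$ of its defining halfspaces (so we apply the theorem with $k=d$ rather than some smaller value), but this is immediate from the definition of an orthant as the intersection of $d$ mutually orthogonal halfspaces.
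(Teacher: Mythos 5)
Your proposal is correct and matches the paper's proof exactly: the paper likewise observes that orthants of fixed orientation are convex polytopes defined by a fixed set of $d$ orientations and then applies Theorem \ref{convexproof} with $k=d$. The extra detail you give about the orientation set is a harmless elaboration of the same one-line argument.
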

\begin{proof}
 The result follows from Theorem \ref{convexproof} and the fact that $\mathcal{T}$ is defined by a set of $d$ fixed orientations.
\end{proof}
\begin{corollary}
Let $P$ be a set of $n$ points in $\mathbb{R}^d$ and $\mathcal{M}$ represent the family of homothets of a $k$-faced convex polytope, $C$. A strong centerpoint $p\in P$ exists w.r.t $\mathcal{M}$  such that $p$ is contained in all $M \in \mathcal{M}$ that contain more than $(1-\frac{1}{k})n$ points from $P$.
\end{corollary}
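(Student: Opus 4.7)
The plan is to observe that this corollary reduces almost immediately to Theorem \ref{convexproof}, so the work is mostly in checking that the family $\mathcal{M}$ fits the hypothesis of that theorem. Specifically, I would first recall that a homothet of $C$ is an image of $C$ under a transformation of the form $x \mapsto \lambda x + t$ with $\lambda > 0$ and $t \in \mathbb{R}^d$. Such a map is the composition of a uniform scaling and a translation, and neither operation changes the outward normal direction of any face: translation does not alter normals at all, and positive scaling sends each face to a parallel face. Hence every homothet $M \in \mathcal{M}$ has exactly the same set of face orientations as $C$ itself.

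Since $C$ has $k$ faces, let $\mathcal{O}$ denote the set of outward normal orientations of these $k$ faces. By the observation above, every $M \in \mathcal{M}$ is a convex polytope whose defining halfspaces all have orientations in $\mathcal{O}$, so $\mathcal{M}$ is a family of convex polytopes defined by $\mathcal{O}$ in the sense of the definition in Section \ref{def}, with $|\mathcal{O}| = k$.

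Applying Theorem \ref{convexproof} to $P$ and this family then immediately produces a point $p \in P$ that is contained in every $M \in \mathcal{M}$ with $|M \cap P| > (1 - \tfrac{1}{k})n$, which is exactly the claim. The only subtle point that needs a sentence of care is the possibility that a homothet might be written with fewer than $k$ defining halfspaces (for instance if some face is redundant in a degenerate scaling); this is not actually an obstacle, since dropping defining halfspaces only makes the polytope larger, so the definition of ``defined by $\mathcal{O}$'' is still satisfied and the theorem applies verbatim. There is no real difficulty beyond this bookkeeping, so I would keep the proof to a two-line deduction citing Theorem \ref{convexproof}.
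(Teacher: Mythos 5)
Your proof is correct and matches the paper's argument exactly: the paper likewise observes that $\mathcal{M}$ is defined by the $k$ fixed orientations of the faces of $C$ and invokes Theorem \ref{convexproof}. Your additional remarks on why homotheties preserve face orientations are just a more explicit spelling-out of the same one-line reduction.
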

\begin{proof}
 The result follows from Theorem \ref{convexproof} and the fact that $\mathcal{M}$ is defined by a set of $k$ fixed orientations, viz. the orientations of the faces of $C$.
\end{proof}

\section{Set Systems with Bounded Intersection}\label{abstract_new}

Let $(P,\mathcal{S}_k)$ be a set system with the following property : the intersection of any $k$ sets in $\mathcal{S}$ is either equal to the intersection of strictly fewer sets among them or has size at most one.  We prove that a strong centerpoint exists for this set system and prove tight bounds.
\begin{theorem}
 Let $P$ be a set of $n$ elements. Then there exists a strong centerpoint $p \in P$ such that $p$ is contained in any $S \in \mathcal{S}_k$ that contains more than $(1-\frac{1}{k})n$ elements from $P$.
\end{theorem}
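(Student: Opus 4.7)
The plan is to generalise Lemma \ref{lines} (which handles $k=2$) to arbitrary $k$, combining a counting argument in the style of Theorem \ref{convexproof} with a minimal-sub-collection extraction that exploits the bounded intersection hypothesis. Let $\mathcal{T}=\{S\in\mathcal{S}_k:|S|>(1-\tfrac{1}{k})n\}$ be the collection of large sets, so each $S\in\mathcal{T}$ satisfies $|P\setminus S|<n/k$; the aim is to show $\bigcap_{S\in\mathcal{T}}S\neq\emptyset$. The first ingredient is a counting lemma: for any $j\le k$ sets $S_1,\ldots,S_j\in\mathcal{T}$, a union bound on complements yields $|\bigcap_{i=1}^{j}S_i|\ge n-\sum_{i=1}^{j}|P\setminus S_i|>n-j(n/k)\ge 0$. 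In particular, if $|\mathcal{T}|\le k-1$ then any element of $\bigcap_{S\in\mathcal{T}}S$ is a strong centerpoint and we are done.

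For $|\mathcal{T}|\ge k$, I would select a \emph{minimal} sub-collection $\mathcal{T}^*\subseteq\mathcal{T}$ (meaning that removing any member strictly enlarges the intersection) with $\bigcap_{S\in\mathcal{T}^*}S=\bigcap_{S\in\mathcal{T}}S$, and argue $|\mathcal{T}^*|\le k-1$; the counting lemma then finishes the proof. Suppose for contradiction $|\mathcal{T}^*|\ge k$ and pick $S_1,\ldots,S_k\in\mathcal{T}^*$. By the bounded intersection hypothesis, either (a) $\bigcap_{i=1}^{k}S_i$ equals the intersection of a strict sub-collection $\{S_i:i\in I\}$, so some $S_j$ with $j\notin I$ is redundant, or (b) $|\bigcap_{i=1}^{k}S_i|\le 1$, which by the counting lemma is equality and gives $\bigcap_{i=1}^{k}S_i=\{p\}$ for a single $p\in P$. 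Case (a) is dispatched directly: the identity $\bigcap_{i\neq j}S_i=\bigcap_{i=1}^{k}S_i$ (forced by $I\subseteq\{1,\ldots,k\}\setminus\{j\}$) implies that $\bigcap_{S\in\mathcal{T}^*\setminus\{S_j\}}S=\bigcap_{S\in\mathcal{T}^*}S$, contradicting the minimality of $\mathcal{T}^*$.

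In case (b), if $|\mathcal{T}^*|=k$ we already have $\bigcap_{S\in\mathcal{T}^*}S=\{p\}$ and are done, so assume some further $T\in\mathcal{T}^*\setminus\{S_1,\ldots,S_k\}$ exists; the goal is to show $p\in T$, whereupon repeating for every such $T$ yields $p\in\bigcap_{S\in\mathcal{T}^*}S$. To this end I would apply bounded intersection to the modified $k$-tuple $(S_1,\ldots,S_{k-1},T)$ and analyse three sub-cases: (i) if $T$ is redundant in this $k$-tuple, then $\bigcap_{i=1}^{k-1}S_i\subseteq T$ and hence $p\in T$; (ii) if some $S_i$ is redundant, the manipulation from case (a) again contradicts the minimality of $\mathcal{T}^*$; (iii) otherwise the new $k$-wise intersection is a singleton $\{q\}$, and since $q\in T$ while $p\notin T$ we would have two distinct points $p,q\in\bigcap_{i=1}^{k-1}S_i$, upon which one more application of bounded intersection to a further $k$-tuple (swapping out another $S_i$) yields either a direct inclusion $p\in T$ or a redundancy that contradicts minimality. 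I expect sub-case (iii) to be the main obstacle, since pinning down that the coexistence of $p$ and $q$ inside $\bigcap_{i=1}^{k-1}S_i$ forces a redundancy requires a careful iteration of the bounded intersection property over the different $k$-tuples one can form inside $\mathcal{T}^*$, and it is exactly the step where the $k=2$ proof of Lemma \ref{lines} generalises least transparently.
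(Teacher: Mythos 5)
Your reduction to a minimal sub-collection $\mathcal{T}^*$ is a genuinely different strategy from the paper's, and most of it is sound: the counting lemma, the disposal of case (a), and sub-cases (i) and (ii) of case (b) all check out. But sub-case (iii) is a real gap, not just a loose end, and the fix you gesture at (``one more application of bounded intersection to a further $k$-tuple'') does not close it. Run the suggested swap: take the $k$-tuple $(S_1,\ldots,S_{j-1},S_{j+1},\ldots,S_{k-1},S_k,T)$ obtained by dropping some $S_j$ with $j\le k-1$. If its intersection equals a sub-intersection omitting $T$, then $p\in T$ (contradiction with the standing assumption); if it equals one omitting $S_k$, then $q\in S_k$ (impossible); if it equals one omitting some remaining $S_i$, minimality of $\mathcal{T}^*$ is violated. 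So the only surviving outcome is that this intersection is yet another singleton $\{r_j\}$, and one checks $r_j\neq p$ (since $p\notin T$), $r_j\neq q$ (since $q\notin S_k$), and $r_j\notin S_j$. That is to say: the bounded intersection property is perfectly consistent with a configuration of $k+1$ large sets $S_1,\ldots,S_k,T$ and $k+1$ distinct points $p,q,r_1,\ldots,r_{k-1}$, each lying in all but one of the sets, with every $k$-wise intersection a (different) singleton and no redundancy anywhere. No contradiction materializes from local applications of the hypothesis to $k$-tuples drawn from these $k+1$ sets, so ruling this configuration out requires a genuinely new ingredient (e.g.\ a sharper counting argument showing $|\bigcap_{i=1}^{k-1}S_i|$ is simultaneously $>n/k$ and $\le 1+|P\setminus S_k|<1+n/k$, which only yields a contradiction when $k$ divides $n$). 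Until sub-case (iii) is actually resolved, the proof is incomplete.

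For contrast, the paper avoids this combinatorial explosion entirely by inducting on $k$: it fixes one large set $S'$, passes to the trace set system $\{S\cap S' : S\in\mathcal{S}_k\}$ on the ground set $S'$, observes that intersecting $k-1$ traces amounts to intersecting $k$ original sets (one of them being $S'$) so the traced system has bounded intersection with parameter $k-1$, and applies the inductive hypothesis to get a point $p\in S'$ hit by every trace of size more than $(1-\frac{1}{k-1})|S'|$. A two-line computation using $|S'|\ge(1-\frac1k)n$ then shows any $S\in\mathcal{S}_k$ missing $p$ has $|S|\le(1-\frac1k)n$. You may want to compare the two: your route, if completed, would give a direct structural statement about which sub-collections can have singleton intersections, but the inductive restriction argument is what actually makes the bounded-intersection hypothesis tractable here.
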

\begin{proof}
 We prove the result by induction on $k$. When $k=2$, the result holds by Lemma \ref{lines}.
 
 We now prove the result for a general $k$.
 Let $S^\prime \in \mathcal{S}_k$ be such that $\vert S^\prime \vert > (1-\frac{1}{k})n$. If no such $S^\prime$ exists then any element in $P$ is a strong centerpoint. Let $\vert S^\prime \vert=n^\prime$. Let us define a new set system $(S^\prime, \mathcal{S}_{k-1})$ where $\mathcal{S}_{k-1}=\{ S_1\cap S^\prime | S_1 \in \mathcal{S}_k\}$. Now $(S^\prime, \mathcal{S}_{k-1})$ has the property that the intersection of any $k-1$ sets in $\mathcal{S}_{k-1}$ either contains atmost one element of $S^\prime$ or is equal to the intersection of strictly fewer sets among them. Let $S_1,\dots,S_{k-1}$ be any $k-1$ sets in $\mathcal{S}_k$ such that $S_i \cap S^\prime \ne \emptyset$ for $1 \leq i \leq k-1$. Therefore, $S_i \cap S^\prime, 1\leq i \leq k-1$ are sets in $S_{k-1}$. If $\vert \left(\bigcap\limits_{1\leq i \leq k-1} S_i \right)\cap S^\prime \vert =1$, then it is easy to see that $\vert \bigcap\limits_{1\leq i \leq k-1} \left(S_i \cap S^\prime\right) \vert =1$. Similarly, if $\left(\bigcap\limits_{1\leq i \leq k-1} S_i \right)\cap S^\prime$ has the property that it is also the intersection of strictly fewer sets among them then the intersection $\bigcap\limits_{1\leq i \leq k-1} \left(S_i \cap S^\prime\right)$ has the same property.
 
 Let $p$ be a strong centerpoint of $S^\prime$ w.r.t $\mathcal{S}_{k-1}$ such that $p$ is contained in all $S_1 \in \mathcal{S}_{k-1}$ that contains more than $(1-\frac{1}{k-1})n^\prime$ elements from $S^\prime$. By inductive hypothesis, such a strong centerpoint exists. We claim that $p$ is contained in all $S \in \mathcal{S}_k$ that contains more than $(1-\frac{1}{k})n$ elements.

Assume there exists $S_1 \in \mathcal{S}_k$ that does not contain $p$. We prove that $S_1$ contains atmost $(1-\frac{1}{k})n$ elements from $P$. If $S_1 \cap S^\prime =\emptyset$ then $\vert S_1 \vert \leq (1-\frac{1}{k})n$. Therefore assume that $S_1 \cap S^\prime \ne \emptyset$. Since $S_1$ does not contain $p$, $S_1$ contains atmost $(1-\frac{1}{k-1})n^\prime$ points from $S^\prime$. Hence,
\begin{eqnarray*}
 \vert S_1 \cap P \vert &\leq& (1-\frac{1}{k-1})n^\prime+n-n^\prime \\
&\leq& n-\frac{n^\prime}{k-1}\\
&\leq& (1-\frac{1}{k})n 
\end{eqnarray*}
(since $n^\prime \geq (1-\frac{1}{k})n$)
 \begin{corollary}
 Let $P$ be a set of $n$ points in $\mathbb{R}^d$, $d \geq 2$, and $\mathcal{H}_d$ represent the family of all hyperplanes in $\mathbb{R}^d$. Then there exists a strong centerpoint $p\in P$ w.r.t $\mathcal{H}_d$ such that $p$ is contained in all hyperplanes that contain more than $(1-\frac{1}{d})n$ points from $P$. Moreover, this bound is tight.
 \end{corollary}

\end{proof}

\section{Conclusion and Open Questions}
We investigated the existence of strong centerpoints and proved tight bounds for convex polytopes defined by a fixed set of orientations and hyperplanes in $\mathbb{R}^d$. We also proved bounds for strong centerpoints for an abstract set system with bounded intersection. It will be interesting to see if there are other classes of geometric objects for which a strong centerpoint exists.

\end{document}